\def\mdseries@tt{m}             
\newcommand\inapx[1]{in Appendix #1}
\title{Reducing Commutativity Verification to Reachability with Differencing Abstractions}
\newcommand\CityProver{{\sc CityProver}}
\newcommand\todo[1]{}
\newcommand\red[1]{{\color{red} #1}}
\newcommand\ignore[1]{}
\renewcommand\red[1]{\textcolor{red}{#1}}
\newtheorem{definition}{Definition}[section]
\newtheorem{theorem}{Theorem}[section]
\lstdefinestyle{MyC}{
language=C++,
tabsize=2,
numbers=none,
captionpos=t,
basicstyle=\sffamily,
commentstyle=\sffamily,
breaklines=true,
columns=fullflexible,
mathescape=true,
escapechar=`}
\newcommand\ttt[1]{\text{\lstinline{#1}}}
\newcommand\sem[1]{[\![ #1 ]\!]}
\newcommand\opeq{\simeq}
\newcommand\Xs{\bar{x}}
\newcommand\Ys{\bar{y}}
\newcommand\Rms{\bar{r}}
\newcommand\Rns{\bar{s}}
\newcommand\As{\bar{a}}
\newcommand\Bs{\bar{b}}
\newcommand\Us{\bar{u}}
\newcommand\Vs{\bar{v}}
\newcommand\cfarightarrow[1]{\xtwoheadrightarrow{#1}}
\newcommand\Ecfarightarrow[1]{\xtwoheadrightarrow[\enct]{#1}}
\newcommand\vln{\theta}
\newcommand\Vlns{\Theta}
\newcommand\Ops{\red{Ops}}
\newcommand\aut{\mathcal{A}}
\newcommand\myM{\addx{}}
\newcommand\myN{\isiny{}}
\newcommand\rvm{\texttt{r}_{\ttt{add}}}
\newcommand\rvn{\texttt{r}_{\ttt{isin}}}
\newcommand\rvmn{\texttt{r}_{\ttt{add;isin}}}
\newcommand\rvnm{\texttt{r}_{\ttt{isin;add}}}
\newenvironment{itemize*}%
  {\begin{itemize}%
}
  {\end{itemize}}
\newenvironment{enumerate*}%
  {\begin{enumerate}%
}
  {\end{enumerate}}
\definecolor{light-gray}{gray}{0.85}
\newcommand\mndifferencing{$mn$-differencing}
\newcommand\figboxN[1]{\noindent{\ \\\fbox{\begin{minipage}{3.2in}
#1
\end{minipage}
}}}
\newcommand\ie{{\it i.e.}}
\newcommand\eg{{\it e.g.}}
\newcommand\FULL[1]{\red{(omitted content)}}
\newcommand\filled{\ttt{filled}}
\newcommand\addx{\ttt{add}(\ensuremath{x})}
\newcommand\pushx{\ttt{push}(\ensuremath{x})}
\newcommand\isinx{\ttt{isin}(\ensuremath{x})}
\newcommand\isiny{\ttt{isin}(\ensuremath{y})}
\renewcommand\aa{\ttt{a}}
\newcommand\bb{\ttt{b}}
\newcommand\PCai{\varphi_{{\addx}}^{{\isiny}}}
\newcommand\topp{\ttt{top}}
\newcommand\PCpp{\varphi_{{\pushx}}^{{\textsf{pop}}}}
\newcommand\autencMmn[1]{\mathcal{A}(#1)}
\newcommand\commuai{\varphi_{\addx}^{\isiny}}
\newcommand\autencMai{\autencMmn{\commuai}}
\newcommand\autencMN{\autencMmn{\varphi_m^n}}
\begin{document}
\toappear{}
\maketitle

\begin{abstract}
Commutativity of data structure methods is of ongoing interest, with roots in the database community. In recent years commutativity has been shown to be a key ingredient to enabling multicore concurrency in contexts such as parallelizing compilers, transactional memory,  speculative execution and, more broadly, software scalability. Despite this interest, it remains an open question as to how a data structure's commutativity specification can be verified automatically from its implementation.

In this paper, we describe techniques to automatically prove the correctness of method commutativity conditions from data structure implementations. We introduce a new kind of abstraction that characterizes the ways in which the effects of two methods differ depending on the order in which the methods are applied, and abstracts away effects of methods that would be the same regardless of the order. We then describe a novel algorithm that reduces the problem to reachability, so that off-the-shelf program analysis tools can perform the reasoning necessary for proving commutativity. Finally, we describe a proof-of-concept implementation  and experimental results, showing that   our tool can verify commutativity of data structures such as a memory cell, counter,  two-place Set, array-based stack, queue, and a rudimentary hash table. We conclude with a discussion of what makes a data structure's commutativity provable with today's tools and what needs to be done to prove more in the future.
\todo{downplay the reduction to reachability - instance of 2-safety}
\todo{remove $I$ in experiments?}
\todo{mkSupp. add appendix.pdf}
\todo{bitset}

\end{abstract}

\section{Introduction}
\label{sec:intro}

For an object $o$, with state $\sigma$ and methods $m$, $n$, etc.,
let $\Xs$ and $\Ys$ denote argument vectors and $o.m(\Xs)/\Rms$ denote 
a method signature, including a vector of corresponding return values $\Rms$.
\emph{Commutativity} of two methods, denoted $o.m(\Xs)/\Rms \bowtie o.n(\Ys)/\Rns$, 
are
circumstances where 
operations $m$ and $n$, when applied in either order, lead to the same
final state and agree on the intermediate return values $\Rms$ and
$\Rns$.
A \emph{commutativity condition} is a logical formula
$\varphi_m^n(\sigma,\Xs,\Rms,\Ys,\Rns)$ indicating, for a given state $\sigma$, whether the two operations will always commute, as a function of parameters and return values.

Commutativity conditions are typically much smaller than full specifications,
yet they are powerful: it has been shown that they are an enabling
ingredient in correct, efficient concurrent execution in the context of parallelizing compilers~\cite{rinard}, transactional memory~\cite{ppopp08,Ni2007,Koskinen2015}, optimistic parallelism~\cite{Kulkarni2007}, speculative execution, features~\cite{Chechik2018}, etc.  More broadly, a recent paper from the systems community~\cite{DBLP:journals/tocs/ClementsKZMK15} found that,  when software fragments are implemented so that they commute, better scalability is achieved.
  Intuitively, commutativity captures independence and, if two code fragments commute then, when combined with linearizability (for which proof techniques exist, \eg,\cite{vafeiadis2010,Bouajjani2015}) they can be executed concurrently.
%
To employ this approach to concurrency
it is important that commutativity be \emph{correct}
and, in recent years, growing effort has been made toward reasoning about  commutativity conditions automatically. At present, these works are either unsound~\cite{DBLP:conf/cav/GehrDV15,aleen} or else 
they rely on data structure specifications as  intermediaries~\cite{KR:PLDI11,tacas18} (See Section~\ref{sec:relwork}).

\todo{Intuitively, commutativity  is a multi-trace property: relating the behaviors in one circumstance with those in another. It is therefore tempting to pose the problem as a $k$-safety problem, along the lines of the following:
\[\begin{array}{c}
    \{ \sigma_1 = \sigma_2 \wedge \varphi_m^n(\sigma,\As,\Bs) \}\\
    \begin{array}{l|l}
       r_m^1 := m(\As); & r_n^2 := n(\Bs);\\
       r_n^1 := n(\Bs); & r_m^2 := m(\As);\\
    \end{array}\\
    \{ r_m^1 = r_m^2 \wedge r_n^1 = r_n^2 \wedge I(\sigma_1',\sigma_2') \}
\end{array}\]
Above we have created two copies of the object and a pre-relation that requires them to be initially equal and for some candidate condition $\varphi_m^n$ to hold. We then create a left program using one method order, and a right program with the other and the post-relation requires the return values agree and that some observational equivalence relation $I$ must hold.
 
A few problems come up with this attempted formulation.
First, starting from any initial state $\sigma_1$ is too strong; instead, we want to consider only reachable $\sigma_1$. In the interest of automation, we want tools to determine what's reachable after any finite number of invocations from a universal client and, consequently, we cannot (yet) state the pre-relation's additional constraint $\textsf{reachable}(\sigma_1)$.
Second, the post-condition above assumes we have already proved $I$ to be an observational equivalence relation. Again, in the interest of automation, we would like to \emph{prove} $I$ to be an equivalence relation, leading us to a different $k$-safety problem: $\{I\ \wedge \bar{v}_1\} r_1 = m(\bar{v}_1) \mid r_2 = m(\bar{v}_2) \{ r_1=r_2 \wedge I\}$. In fact, we need one of these quads for each $m$ in the ADT. 
So we now need $\textsf{card}(M) + 2$ proofs and, unfortunately, na\"{i}vely gluing them together is problematic for tools because the abstractions you need for the earlier proofs are different from those needed for later proofs.

Consequently, commutativity ..
escapes recent tools for $k$-safety~\cite{weaver,descartes}.
}

In this paper, we describe the first automatic method for verifying a given commutativity condition directly from the data structure's source code. 
At a first look, this problem seems difficult because data structures can be implemented in many ways and commutativity may seem to necessitate reasoning about full functional correctness.
%
Our first enabling insight is that, unlike elaborate invariants \& abstractions needed for verifying full functional correctness, we can achieve commutativity reasoning with specialized abstractions that are more targeted. Given two methods, we introduce an abstraction of states with respect to these methods that allows us to reason safely within the limited range of exploring the \emph{differences} between the behavior of pairs of operations when applied in either order and \emph{abstracting away} the state mutations that would be the same, regardless of the order in which they are applied. We call this an \mndifferencing\ abstraction.

Commutativity reasoning is  challenging also because we need to know whether two different concrete post states---one arising from $m(\Xs);n(\Ys)$ and one from $n(\Ys);m(\Xs)$---represent the same object state. Toward this challenge, we employ a notion of observational equivalence in our definition of commutativity, enabling us to  reason automatically about equivalent post-states, even in the absence of a data structure specification.
Crucially, in our observational equivalence relations, we exploit the convenience of using 
both abstract equality for some parts of the data structure (those that are order dependent such as the top few elements of a stack),  as well as direct equality for other parts of the data structure (other effects on, \eg, the remaining region of the stack, that are not order dependent).

Next, we introduce a novel automata-theoretic transformation for automating commutativity reasoning. Our transformation $\mathcal{E}(s_m,s_n)$ takes, as input, data structure implementations $s_m$ and $s_n$. The transformation  generates a symbolic automaton $\autencMN$ 
using a form of product construction to relate $s_m;s_n$ with $s_n;s_m$, 
but designed in a particular way so that, when it is provided
a candidate commutativity condition $\varphi_m^n$,
a safety proof on $\autencMN$ entails that $\varphi_m^n$ is a commutativity condition.
%
Our encoding co\"{e}rces a reachability analysis to 
perform the reasoning necessary for verifying commutativity: finding
\mndifferencing\ abstractions and proving observational equivalence.
%
For example, when applied to a Stack's \texttt{push}$(v)$ and \texttt{pop}
operations, a predicate analysis would discover predicates including whether the top value of the stack is equal to $v$ (as well as negations), and track where in the Stack implementation behaviors diverge on the basis of this abstraction. 


We implement our strategy in a new tool called \CityProver{}. It takes as input data structures written in a C-like language (with integers, arrays, and some pointers), implements the above described encoding and then employs Ultimate's~\cite{ultimate} or CPAchecker's~\cite{Beyer:CAV11} reachability \ignore{and termination } analyses to prove commutativity.
We show our approach to work on simple numeric data structures such as 
Counter, Memory, \textsf{SimpleSet},  \textsf{ArrayStack}, \textsf{ArrayQueue} and a hash table.

We consider the usability benefits of our strategy.
First, it can be applied to \emph{ad hoc} data structures for which the specification is not readily available.
Also, unlike data structure specifications, commutativity conditions can  be fairly compact, written as smaller logical formulae.
Consequently, with \CityProver{},  
a user can guess commutativity conditions and rely on our machinery to either prove them correct or find a counterexample; we discuss our use of this  methodology.

Our experiments confirm that our technique permits out-of-the-box reachability tools to verify some ADTs. They also reveal the current limitations of those tools to support ADTs in which equivalence  is up to \emph{permutations}. This leads to a blowup in the disjunctive reasoning that would be necessary, and an important question for future work.

\paragraph{Contributions.}
We present the first automated technique for verifying given commutativity conditions directly from data structure implementations.
\begin{itemize}
\item We formalize \emph{\mndifferencing\ abstractions} and observational equivalence relations for commutativity.
\item We present a symbolic technique for reducing commutativity verification to reachability.
\item We build a prototype tool \CityProver{}, which
generates a proof (or finds a counterexample) that $\varphi_m^n$ is a commutativity condition.
\item We demonstrate that \CityProver{} can prove commutativity of simple data structures including a memory cell, counter, two-place Set, array stack, array queue and hash table.
\end{itemize}

\noindent
\newif\ifarxiv
\arxivtrue
Our verified commutativity conditions can immediately be used with existing transactional object systems~\cite{aplas19}. Moreover, with some further research, they could be combined with linearizability proofs and used inside parallelizing compilers.

\paragraph{Limitations.}
In this paper we have focused on numeric programs. However, \mndifferencing\ abstractions and our reduction strategy appear to generalize to heap programs, left for future work.  
Also, while our reduction appears to be general, 
we were limited by the reasoning power of existing reachability tools, specifically, the need for permutation invariants.
Our work therefore establishes benchmarks for future work to improve those tools.



\lstset{numbers=none}
\begin{figure}[t]\centering
  \begin{tabular}{|l|l|}
    \hline
    \begin{minipage}[t]{2.8in}
\begin{lstlisting}[basicstyle=\sffamily]
class SimpleSet {
 private int a, b, sz;
 SimpleSet() { a=b=-1; sz=0; }
 void add(uint x) {
   if(a==-1 && b==-1) { a=x; sz++; }
   if(a!=-1 && b==-1) { b=x; sz++; }
   if(a==-1 && b!=-1) { a=x; sz++; }
   if(a!=-1 && b!=-1) { return; }
 }
 bool isin(uint y) { return (a==y||b==y);} 
 bool getsize() { return sz; } 
 void clear() { a=-1; b=-1; sz = 0; }
}
\end{lstlisting}
\end{minipage} \\
\hline
    \begin{minipage}[t]{2.8in}
\begin{lstlisting}[basicstyle=\sffamily]
class ArrayStack {
 private int A[MAX], top;
 ArrayStack() { top = -1; }
 bool push(int x) {
  if (top==MAX-1) return false;
  A[top++] = x; return true;
 }
 int pop() { 
  if (top == -1) return -1;
  else return A[top--]; }
 bool isempty() { return (top==-1); }
}
\end{lstlisting}
\end{minipage} \\
\hline                                                       
  \end{tabular}
  \caption{\label{fig:simpleset} (a) On the top, a \ttt{SimpleSet} data structure, capable of storing up to two
  non-zero identifiers (using private memory \ttt{a} and \ttt{b}) and tracking the size \ttt{sz} of the Set. (b) On the bottom, an \ttt{ArrayStack}
  data structure that implements a simple stack using an array and top index.}
\end{figure}

\subsection{Motivating Examples}

\newcommand\tta{\ttt{a}}
\newcommand\ttb{\ttt{b}}

Consider the \ttt{SimpleSet} data structure shown at the top of
Fig.~\ref{fig:simpleset}. This data structure is a simplification of a
Set, and is capable of storing up to two
natural numbers using private integers \ttt{a} and \ttt{b}.
Value $-1$ is reserved to indicate that
nothing is stored in the variable.
Method \addx\ checks to see if there is space available and,
if so, stores $x$ where space is available. 
%
Methods \isiny, \ttt{getsize}() and \ttt{clear}()
are straightforward.

A commutativity condition, written as a logical formula $\varphi_m^n$, describes the conditions under which two methods $m(\Xs)$ and $n(\Ys)$ commute, in terms of the argument values and the state of the data structure $\sigma$. 
%
Two methods \isinx{} and \isiny{} always commute because neither modifies the ADT,
so we say $\varphi_{\isinx}^{\isiny} \equiv \textsf{true}$.
The commutativity condition of methods \addx{} and
\isiny{} is more involved:
$$
\PCai \equiv x \neq y \vee (x=y\wedge \ttt{a} = x) \vee (x=y\wedge \ttt{b} = x)
$$
This condition specifies three situations (disjuncts) in which the
two operations commute. In the first case, the methods are operating
on different values. Method \isiny{} is a read-only operation and
since $y\neq x$, it is not affected by an attempt to insert $x$.
Moreover, regardless of the order of these methods, \addx{}
will either succeed or not (depending on whether space is available)
and this operation will not be affected by \isiny{}.
In the other disjuncts, the element being added is already in the Set, so
method invocations will observe the same return values regardless of the order and no changes (that could be observed by later methods) will be made by either of these methods.
%
Other commutativity conditions include:
$\varphi_{\isiny{}}^{\ttt{clear}} \equiv  (\aa\neq y \wedge \bb\neq y)$,
$\varphi_{\isiny{}}^{\ttt{getsize}} \equiv  \textsf{true}$,
$\varphi_{\addx{}}^{\ttt{clear}} \equiv  \textsf{false}$,
$\varphi_{\ttt{clear}}^{\ttt{getsize}} \equiv  \ttt{sz} = 0$ and
$\varphi_{\addx}^{\ttt{getsize}} \equiv  
  \ttt{a} = x \vee \ttt{b} = x \vee
  (\ttt{a} \neq x \wedge \ttt{a} \neq -1 \wedge \ttt{b} \neq x \wedge \ttt{b} \neq -1)
$.

While this example is somewhat artificial and small, we picked it to highlight couple of aspects.
One, commutativity reasoning can quickly become onerous to do manually.
Two, there can be multiple
concrete ways of representing the same semantic data structure state:
$\aa=5 \wedge \bb=3$ is the same as $\aa=3 \wedge \bb=5$.
This is typical of most data structure implementations,
 such as  list representations of unordered sets, hash tables,
 binary search trees, etc.
Our goal of commutativity reasoning from source code means that we do not have
the specification provided to us when
two concrete states are actually the same state semantically.
Below we will discuss how we overcome this challenge.  

As a second running example, let us consider an array based implementation of Stack, given at the bottom of Fig.~\ref{fig:simpleset}.
\ttt{ArrayStack} maintains  array \ttt{A} for data, a \topp\ index to indicate end of the stack, and has operations \ttt{push} and \ttt{pop}.
The commutativity condition $\PCpp \equiv \topp > -1 \wedge \ttt{A}[\topp] = x \wedge \topp < \ttt{MAX}$
 captures that they commute provided that there is at least
one element in the stack, the top value is the same as the value being pushed
and that there is enough space to \ttt{push}.




\subsection{Enabling abstractions}

As these examples show, specialized abstractions are needed for commutativity reasoning.
\ignore{
For example, when considering the commutativity 
$\varphi_{\ttt{wasFilled}}^{\ttt{clear}}$ 
of \ttt{wasFilled} and \ttt{clear}, 
for ensuring that return values agree, we can use an abstraction that only considers predicates
$\mathcal{P}_{\ttt{wasFilled}}^{\ttt{clear}} = 
\{\filled >0, \filled \leq 0\}$.
That is, we only need to consider how value of \filled\ is impacted by the two operations.
}
When considering the commutativity of \isiny{} and \ttt{clear}, we can use an abstraction that ignores \ttt{sz} and instead just reasons about \aa\ and \bb,
such as the predicates 
$ \aa=\ttt{y}$ and $ \bb=\ttt{y} $, along 
with their negations.
This abstraction not only ignores \ttt{sz}, but it also \emph{ignores other possible values} for \aa\ and \bb. The only relevant aspect of the state is whether or not \ttt{y} is in the set.
For \ttt{ArrayStack}, when considering \pushx{} and \ttt{pop}, we can similarly
abstract away deeper parts of the stack: for determining \emph{return values}, our abstraction only needs to consider the top value.
In Section~\ref{sec:abstraction}, we will formalize this concept as an \emph{\mndifferencing\ abstraction} denoted $(\alpha_m^n, R_\alpha)$ and, in Section~\ref{sec:verification}, describe a strategy that discovers these abstractions automatically. This concept is illustrated on the left of the following diagram:
\begin{center}
  \includegraphics[width=3.3in]{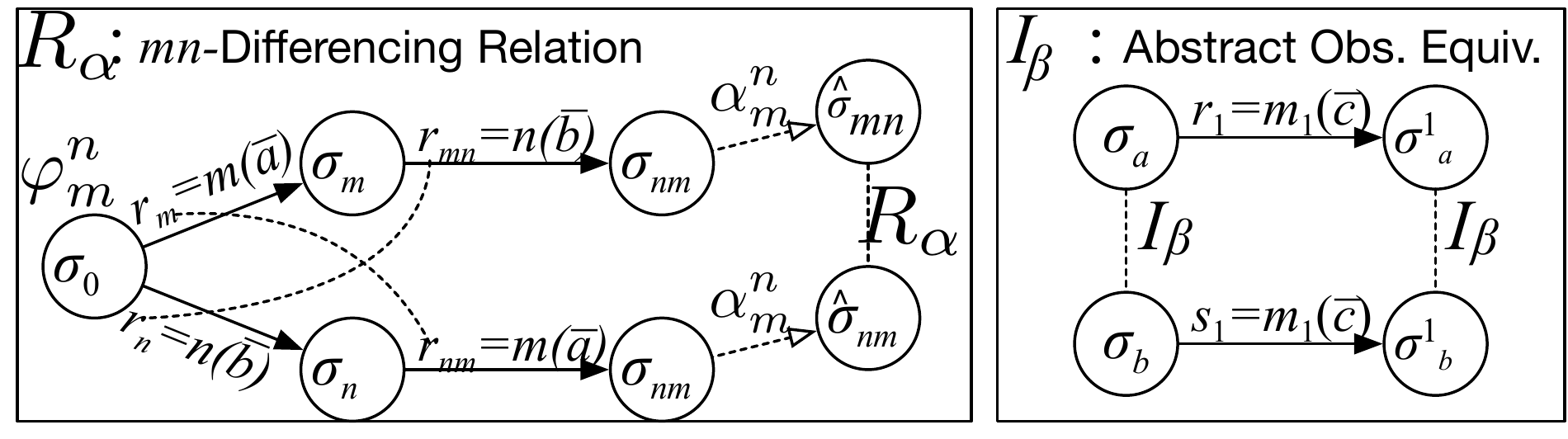}
\end{center}
Intuitively, $R_\alpha$ is a relation on abstract states whose purpose is to ``summarize'' the possible pairs of post-states that (i) originate from
a state $\sigma_0$ satisfying the commutativity condition $\varphi_m^n$ and (ii) will have agreed on
intermediate return values along the way.
To this end, the abstraction $\alpha_m^n$ must be fine-grained enough to reason about agreement on these intermediate 
return values ($r_m = r_{nm}$, $r_n = r_{mn}$) but can abstract away other details. 
In the \isiny{}/\ttt{clear} example, the proposed abstraction that simply uses predicates $\ttt{a} = y$ and $\ttt{b} = y$ is an \mndifferencing\ abstraction because it is all that's needed to show that, in either order, the return values agree.


\paragraph{Observational Equivalence.}
While this \mndifferencing\ abstraction ensures that return values agree, how can we ensure that either order leads to equivalent post-states? 
First, for some start state $\sigma$, let us denote by $\sigma_{mn}$ the state arising by applying $m(\As);n(\Bs)$. We similarly define $\sigma_{nm}$. We define observational equivalence in a standard way, by requiring that any sequence of method invocations (or \emph{actions}) $m_1(\As_1)/r_1;m_1(\As_1)/r_2;\ldots$, when separately applied to $\sigma_{mn}$ and $\sigma_{nm}$ returns sequences of values that agree. This can be seen on the right in the above diagram: relation $I_\beta$ is maintained.

Observational equivalence lets us balance abstraction (for reasoning about $m,n$-specific interactions) with  equality (for reasoning about parts of the state that are unaffected by the order of $m$ and $n$). 
That is, observational equivalence is here used in the context of two states ($\sigma_{mn}$ and $\sigma_{nm}$) that
originate from precisely the same state\footnote{Note that it is sufficient to start from same concrete state, rather than two observationally equivalent start states in the reasoning, as only one of the order of the methods execution will occur at run-time.}, and have likely not strayed too far from each other. In this way, the infinite extension often collapses to comparing two states that are exactly equal.
%
For example, consider the relations:
\[\begin{array}{rl}
I_{AS}(\sigma_{mn},\sigma_{nm}) \equiv & \topp_{mn} = \topp_{nm}  \wedge\\
& (\forall i. 0 \leq i \leq \topp_{mn} \Rightarrow \aa_{mn}[i] = \aa_{nm}[i]) \\ 
I_{SS}(\sigma_{mn},\sigma_{nm}) \equiv & ((\aa_{mn} = \aa_{nm} \wedge \bb_{mn} = \bb_{nm}) \\
& \;\;\;\;\;\; \vee\; (\aa_{mn} = \bb_{nm} \wedge \bb_{mn} = \aa_{nm})) \\ 
& \;\;\wedge\;\; 	(\ttt{sz}_{mn} = \ttt{sz}_{nm})\\
\end{array}\]
For the \ttt{ArrayStack}, $I_{AS}$ says that the two states agree on the (ordered) values in the Stack. ($\topp_{mn}$ means the value of $\topp$ in $\sigma_{mn}$.)
For \ttt{SimpleSet},  $I_{SS}$ specifies that two states are equivalent provided that they are storing the same values---perhaps in different ways---and they agree on the size.

If we can find an $I$ and a valid \mndifferencing\ relation $R_\alpha$ such that $R_\alpha \Rightarrow I$, then we have proved commutativity.
These abstractions work well together when used in tools (discussed below). When considering, for example, \ttt{push} and \ttt{pop}, only the top element of the stack is changed differently depending on the order, and our abstraction lets us treat the other aspects of the state (that are unchanged or changed in the same way regardless of $m-n$ order) as being exactly the same.
%
%
This observation let us employ verification analysis in a way that is focused, and avoids the need for showing full functional correctness.
%

\subsection{Challenges}

Above we have sketched some of the enabling insights behind our work, but many questions remain. In the rest of this paper, we answer questions such as:

\begin{itemize*}
\item Is there a formal equivalence between proving commutativity and finding sufficient \mndifferencing\ abstractions and observational equivalence? (Section~\ref{sec:abstraction})
\item Can we employ these concepts to reduce commutativity verification to reachability, so that \mndifferencing\ and observational equivalence abstractions can be correlated with non-reachability proofs?
 (Section~\ref{sec:verification})
\item Can we build a tool around these concepts? (Section~\ref{sec:impl})
\item Can we leverage existing reachability solvers to prove commutativity of simple ADTs? (Section~\ref{sec:eval})
\item How usable is our technique? (Section~\ref{subsec:usability})
\end{itemize*}


\ignore{
In Section~\ref{sec:verification}, we describe an algorithm for
automatically verifying a given commutativity condition $\varphi_m^n$.
Our technique reduces this problem to  reachability via a transformation
$
    \mathcal{E}(M, s_m, s_n) = \ignore{\lambda \varphi_m^n.\ }\autencMmn{ \varphi_m^n }
$
from the data structure
method implementations $M = \{s_m, s_n, \ldots\}$, to an \emph{encoding}
$\autencMmn{ \varphi_m^n }$. This resulting
$\autencMmn{ \varphi_m^n }$ consumes a formula
$\varphi_m^n$ and is designed such that  its safety
implies that $\varphi_m^n$ is a commutativity condition.
Overall, the encoding co\"{e}rces program analysis tools to automatically discover \mndifferencing\ abstractions and observational equivalence relations in order to prove commutativity.
The main soundness result in Section~\ref{sec:verification} is that a proof that
$\autencMN$ cannot reach \ttt{ERROR} implies that $\varphi_m^n$ is a 
commutativity condition for $m$ and $n$.

\subsection{Proving Commutativity with \CityProver}

In Section~\ref{sec:impl}, we describe an implementation of our technique which takes as input a candidate commutativity condition and a data structure implementation written in C. Our implementation   generates the encoding described above and then applies reachability analyses to the encoding in order to verify the commutativity condition.
%
}

\section{Preliminaries}

\paragraph{Language.} We work
with a simple model of a (sequential) object-oriented language.
We will denote an object by $o$. Objects can have 
member fields $o.a$ and, for the purposes of this paper, we
assume them to be integers,  structs or integer arrays. 
Methods are denoted $o.m(\bar{x}),o.n(\bar{y}),\ldots$ where $\bar{x}$ is
a vector of the arguments. We often will omit the $o$ when it is unneeded; we
use the notation 
$m(\Xs)/\Rms$ to refer to the return variables $\Rms$. 
We use $\As$ to denote a vector of argument \emph{values}, 
$\Us$ to denote a vector of return \emph{values}
and $m(\As)/\Us$ to denote a corresponding invocation of a method which we  call an \emph{action}. For a method $n$, our convention will be
the signature $n(\Ys)/\Rns$ and values $n(\Bs)/\Vs$.
%
Methods source code is parsed from C into control-flow automata (CFA)~\cite{Henzinger2002}, discussed in Section~\ref{subsec:objects}, using \texttt{assume} to represent branching and loops. Edges are labeled with straight-line ASTs:
\[\begin{array}{rclrcl}
  e &::=& o.a \mid c \mid \mathbb{Z} \mid x  \mid e \otimes e,  
  \qquad b ::= b \oplus b \mid \neg b \mid \textsf{true},  \\
  s &::=& \texttt{assume(} b \texttt{)} \mid s_1 \texttt{ ; } s_2 \mid x \texttt{ := } e 
\end{array}\]
Above $\otimes$ represents typical arithmetic operations on integers and 
$\oplus$ represents typical boolean relations.
We use $s_m$ to informally refer to the source code of object method $m$ (which is, formally, represented as an object CFA, discussed in Section~\ref{subsec:objects}).
For simplicity, we assume that one object method cannot call another, and that 
all object methods terminate. Non-terminating object methods are typically not useful and their termination can 
be confirmed using existing termination tools (\eg~\cite{T2}).

We fix a single object $o$, denote that object's concrete state space $\Sigma$, and assume decidable
equality.
We denote $\sigma \xrightarrow{m(\As)/\Us} \sigma'$
for the big-step semantics in which the arguments are provided, the entire method is reduced. For lack of space, we omit the small-step
semantics $\sem{s}$ of individual statements. 
For the big-step semantics, we assume that such a successor state $\sigma'$ is always defined (total) and is unique (determinism). Programs can be transformed so these conditions hold, via wrapping~\cite{tacas18} and prophecy variables~\cite{abadilamport}, resp.

\begin{definition}[Observational Equivalence (\eg~\cite{Koskinen2015})]
We define relation $\opeq \subseteq  \Sigma \times \Sigma$ as the following greatest fixpoint:
  $$
  \infer={\sigma_1 \opeq \sigma_2}
         { \forall m(\As) \in M.\ \sigma_1 \xrightarrow{ m(\As)/\Rms } \sigma_1'
           & \sigma_2 \xrightarrow{ m(\As)/\Rns } \sigma_2'
           & \Rms = \Rns
           & \sigma'_1 \opeq \sigma'_2}
$$
\end{definition}
\noindent
The above co-inductive definition expresses that two states $\sigma_1$ and $\sigma_2$ of an object are observationally equivalent $\opeq$ provided that, when any given method invocation $m(\As)$ is applied to both $\sigma_1$ and $\sigma_2$, then the respective return values agree. Moreover, the resulting post-states maintain the $\opeq$ relation.
Notice that a counterexample to observational equivalence (\ie, something that is not in the relation)
is a finite sequence of method operations $m_1(\As_1),...,m_k(\As_k)$ applied to both $\sigma_1$ and
$\sigma_2$ such that for $m_k(\As_k)$,  $\Rms_k \neq \Rns_k$. 

We next use observational equivalence to define commutativity.
As is typical~\cite{DBLP:conf/pldi/DimitrovRVK14,tacas18} we define commutativity first at the layer of an action, which are particular \emph{values}, and second at the layer of a method, which includes a quantification over all of the possible values for the arguments and return variables.

\begin{definition}[Commutativity]\label{def:commute} For values $\As,\Bs$, return values $\Us,\Vs$, and methods $m$ and $n$,
``{\bf actions  $o.m(\As)/\Us $ and $o.n(\Bs)/\Vs$ commute},'' denoted $o.m(\As)/\Us \bowtie o.n(\Bs)/\Vs$, if
\[\begin{array}{c}
   \forall\sigma.\  \sigma \xrightarrow{m(\As)/\Us} \sigma_m \xrightarrow{n(\Bs)/\Vs} \sigma_{mn}
     \;\wedge\;
      \sigma \xrightarrow{n(\Bs)/\Vs} \sigma_n \xrightarrow{m(\As)/\Us} \sigma_{nm}
\;\;  \\
   \Rightarrow \;\; \sigma_{mn} \simeq \sigma_{nm}
\end{array}\]
(Note that these are \emph{values}, so action commutativity requires return value agreement.) 
We further define ``{\bf methods $o.m$ and $o.n$ commute},'' denoted $o.m\ \bowtie\ o.n$ provided that\\
$ \forall \As\ \Bs\ \Us\ \Vs. o.m(\As)/\Us \bowtie o.n(\Bs)/\Vs$.
\end{definition}
\noindent
The quantification $\forall  \As\ \Bs\ \Us\ \Vs$ above means  
vectors of all possible argument and return values. Note that commutativity begins
with a single state $\sigma$, but implies a \emph{relation} on states. (We
will come back to this later.)
Our work extends to a more
fine-grained notion of commutativity: an asymmetric version called
left-movers and right-movers~\cite{lmrm}, where a method commutes in
one direction and not the other.


We will work with commutativity conditions for methods
$m$ and $n$ as logical
formulae over initial states and the arguments/return values of the methods. We denote a logical commutativity formula as $\varphi_m^n$ and assume a decidable
interpretation of formulae:
$\sem{\varphi_m^n} : (\sigma,\Xs,\Ys,\Rms,\Rns) \rightarrow \mathbb{B}$.
%
(We tuple the arguments for brevity.)
The first argument is the initial state.  Commutativity \emph{post}-
and \emph{mid}-conditions can also be written~\cite{KR:PLDI11} but
here, for simplicity, we focus on commutativity \emph{pre}-conditions.
We may write $\sem{\varphi_m^n}$ as 
$\varphi_m^n$ when it is clear from context that $\varphi_m^n$ is meant to be interpreted.

\begin{definition}[Commutativity Condition]
We say that logical formula $\varphi_m^n$ is a commutativity condition for $m$ and $n$
provided that 
$\forall \sigma\ \As\ \Bs\ \Us\ \Vs.\
\sem{\varphi_m^n}\ \sigma\ \As\ \Bs\ \Us\ \Vs
\Rightarrow m(\As)/\Us\ \bowtie\ n(\Bs)/\Vs$.
\end{definition}

\newcommand\enct{\mathcal{E}}
\newcommand\autencD{\mathcal{A}^\mathcal{E}}
\newcommand\encQ{Q_{\enct}}
\newcommand\encqO{q^0_{\enct}}
\newcommand\encX{X_{\enct}}
\newcommand\encOps{\Ops_{\enct}}
\newcommand\encarr{\Ecfarightarrow{\,}}
\newcommand\commu{H_{c}}
\newcommand\ncommu{H_{nc}}

\section{Abstraction}
\label{sec:abstraction}

We now formalize abstractions
for two interconnected aspects of commutativity reasoning:
\mndifferencing\ and observational equivalence relations.
%
These treatments provide a route to automation and re-use
of abstraction synthesis techniques, discussed in Section~\ref{sec:verification}.


\newcommand\rms{\bar{v}_{m}}
\newcommand\rmns{\bar{v}_{mn}}
\newcommand\rns{\bar{v}_{n}}
\newcommand\rnms{\bar{v}_{nm}}
For convenience, we define post-states \textsf{posts} and return value agreement \textsf{rvsagree} as follows:
\[\begin{array}{l}
   \textsf{posts}(\sigma, m, \As, n, \Bs)
   \equiv (\sigma_{mn},\sigma_{nm}) \textrm{ such that }\\
\;\;\; \;\;\sigma \xrightarrow{m(\As)/\rms} \sigma_m
    \wedge \sigma_m \xrightarrow{n(\Bs)/\rmns} \sigma_{mn}\\
\;\;\;\;\;\;\;\;\;\; \wedge \; \sigma \xrightarrow{n(\Bs)/\rns} \sigma_{n}
       \wedge \sigma_n \xrightarrow{m(\As)/\rnms} \sigma_{nm}\\
   \textsf{rvsagree}(\sigma, m, \As, n, \Bs)
   \equiv \textrm{for } \rms,\rmns,\rns,\rnms \textrm{ then  }\\
\;\;\; \;\;\sigma \xrightarrow{m(\As)/\rms} \sigma_m
    \wedge \sigma_m \xrightarrow{n(\Bs)/\rns} \sigma_{mn}\\
\;\;\;\;\;\;\;\;\;\; \wedge \; \sigma \xrightarrow{n(\Bs)/\rns} \sigma_{n}
       \wedge \sigma_n \xrightarrow{m(\As)/\rnms} \sigma_{nm},\\
\;\;\;\;\;\; \textrm{implies } \rms=\rnms \textrm{ and }\rns=\rmns 
   \end{array}\]

\newcommand{\SigmaA}{\Sigma^\alpha}
\newcommand{\SigmaB}{\Sigma^\beta}

\noindent
We now formalize \mndifferencing\ abstraction:
%
%
%
\begin{definition}[\mndifferencing\ Abstraction $(\alpha,R_\alpha)$]
\label{defn:alphaabs}
Let $o$ be an object with state space $\Sigma$, and consider two methods $m$ and $n$.
Let $\alpha : \Sigma \rightarrow \SigmaA$ be an abstraction of the states, and $\gamma: \SigmaA \to \mathcal{P}(\Sigma)$ the corresponding concretization.
Let $\sem{R_\alpha} : \SigmaA \rightarrow \SigmaA \rightarrow \mathbb{B}$ be a relation on abstract states.
 We say that $(\alpha, R_\alpha)$ is an \mndifferencing\ abstraction if
\[\begin{array}{ll}
  \forall \sigma_1^\alpha, \sigma_2^\alpha \in \SigmaA.
  \sem{R_\alpha}(\sigma_1^\alpha, \sigma_2^\alpha) \Rightarrow\\
  \qquad \forall \sigma \As \Bs.\ \textsf{posts}(\sigma, m, \As, n, \Bs) \in \gamma(\sigma_1^\alpha) \times \gamma(\sigma_2^\alpha) \Rightarrow \\
  \qquad
  \textsf{rvsagree}(\sigma, m, \As, n, \Bs)
\end{array}\]
\end{definition}
%
\noindent
Here $\alpha$ provides an abstraction and $R_\alpha$ is a relation in that
abstract domain. Intuitively, $R_\alpha$ relates pairs of post-states that (i) originate from same state $\sigma$ and (ii) agree on intermediate return values. 
$\alpha$ must be a precise enough abstraction so that $R_\alpha$
can discriminate between pairs of post-states where return values will have agreed versus disagreed.
For the \isinx{}/\ttt{clear} example, we can let $\alpha$ be the abstraction that 
tracks whether $\aa = x$ and whether $\bb =x$. Then
$R_\alpha(\sigma_1,\sigma_2) \equiv (\aa = x)_1 = (\aa= x)_2 \wedge (\bb = x)_1 = (\bb= x)_2$,
\ie\ the relation that tracks if $\sigma_1$ and $\sigma_2$  agree on whether $x$ is in the set.
\begin{definition}
\label{defn:impliesA}
  Let $(\alpha, R_\alpha)$ be an \mndifferencing\ abstraction
   and $\varphi_m^n$ a logical formula on concrete states and actions of $m$ and $n$.
  We say that ``$\varphi_m^n$ \emph{implies} $(\alpha,R_\alpha)$'' if
\[
  \forall \sigma\ \As\ \Bs\ \Rms\ \Rns.\ \varphi_m^n(\sigma, \As, \Bs, \Rms, \Rns) \Rightarrow R_\alpha(\alpha(\sigma_{mn}),\alpha(\sigma_{nm}))
\]
where $(\sigma_{mn},\sigma_{nm}) = \textsf{posts}(\sigma, m, \As, n, \Bs)$.
\end{definition}
\noindent
The above definition treats commutativity condition $\varphi_m^n$ as a pre-condition, specifically one that characterizes certain start states $\sigma_0$ for which the \mndifferencing\ abstraction will hold.
If we let $\varphi_{\isinx}^{\ttt{clear}} \equiv \aa \neq x \wedge \bb \neq x$, this will imply $R_\alpha$ in the \textsf{posts}: the post states will agree on whether $x$ is in the set, thus capturing our intuitive understanding of commutativity.

Next, we introduce another (possibly different) abstraction which helps to reason about observational equivalence of the post-states reached.
\begin{definition}
\label{defn:betaabs}
Let $\beta : \Sigma \rightarrow \SigmaB$ be an abstraction function, 
with corresponding concretization function $\delta : \SigmaB \rightarrow \mathcal{P}(\Sigma) $, and 
let $I_\beta$ be a relation on these abstract states, $\sem{I_\beta} : \SigmaB \times \SigmaB \rightarrow \mathbb{B}$. Then $I_\beta$ is an \emph{observational equivalence relation} iff:
\[\begin{array}{l}
\forall \sigma_1^\beta, \sigma_2^\beta \in \SigmaB.\ \sem{I_\beta} (\sigma_1^\beta, \sigma_2^\beta) \Rightarrow\\
\;\;\;\forall \sigma_1 \in \delta(\sigma_1^\beta), \sigma_2 \in \delta(\sigma_2^\beta).
\; \sigma_1 \opeq \sigma_2
\end{array}\]
\end{definition}
\noindent
$I_{SS}$,  defined earlier, is such a relation.

The next definition is a natural notion stating when an abstraction is coarser than the other; we use it later.
\begin{definition}
\label{defn:impliesB}
For all $(\alpha,R_\alpha)$ and $(\beta,I_\beta)$ 
we say that ``$(\alpha,R_\alpha)$ implies $(\beta,I_\beta)$'' iff:
\[
  \forall \sigma_1, \sigma_2 \in \Sigma. \sem{R_\alpha}(\alpha(\sigma_1), \alpha(\sigma_2)) \Rightarrow \sem{I_\beta}(\beta(\sigma_1), \beta(\sigma_2))
\]
\end{definition}

\noindent
The following theorem provides sufficient conditions for a commutativity condition for the two methods, with respect to these abstractions.
\begin{theorem}
  \label{thm:abstraction}
Let $o$ be an object with state space $\Sigma$, and methods $m(\Xs)$ and $n(\Ys)$. 
Let $\varphi_m^n$ be a logical formula on $\Sigma$ and actions of $m$ and $n$.
Let $(\alpha_m^n, R_m^n)$ be an \mndifferencing\ abstraction and $(\beta,I_\beta)$ such that $I_\beta$ is an abstract observational equivalence relation.
If $\varphi_m^n$ implies $(\alpha_m^n, R_m^n)$ and $(\alpha_m^n, R_m^n)$ implies $(\beta, I_\beta)$ then $\varphi_m^n$ is a commutativity condition for $m$ and $n$.
\begin{proof}[Proof sketch]
Fix $\sigma \in \Sigma$, and $m(\As)/\Rms$ and $n(\Bs)/\Rns$ actions for $m$ and $n$ respectively. We need to show that $m(\As)/\Rms \bowtie n(\Bs)/\Rns$. In particular, we need to show that the return values agree, and the post states reached with methods commuted are observationally equivalent. From Definition \ref{defn:impliesA}, we have that $R$ holds for the $\alpha$-abstraction of post states, and then from Definition \ref{defn:alphaabs} it follows that the return values agree. On the other hand, from Definition \ref{defn:impliesB} it follows that $I_\beta$ holds for the $\beta$-abstraction of post states as well, and from Definition \ref{defn:betaabs} it follows that the (concrete) post states are observationally equivalent.

\end{proof}
\end{theorem}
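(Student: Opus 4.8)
The plan is to unfold the goal ``$\varphi_m^n$ is a commutativity condition'' down to the two atomic obligations that the hypotheses are tailored to discharge, and then chain the four definitions together. First I would spell out the definition of a commutativity condition and then Definition~\ref{def:commute}, reducing the goal to the following: for every $\sigma,\As,\Bs$ (and return values $\Rms,\Rns$) satisfying $\sem{\varphi_m^n}$, writing $(\sigma_{mn},\sigma_{nm}) = \textsf{posts}(\sigma,m,\As,n,\Bs)$, we must establish (a) \emph{return-value agreement}, $\textsf{rvsagree}(\sigma,m,\As,n,\Bs)$, and (b) \emph{observational equivalence}, $\sigma_{mn}\opeq\sigma_{nm}$. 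The two hypotheses are exactly what is needed: the first chain (that $\varphi_m^n$ implies the $mn$-differencing abstraction $(\alpha_m^n,R_m^n)$) yields (a), and the second chain (through $I_\beta$, an observational equivalence relation) yields (b).

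For (a) I would fix such a $\sigma,\As,\Bs$ with $\sem{\varphi_m^n}$ holding and form $(\sigma_{mn},\sigma_{nm})$ via \textsf{posts}. Definition~\ref{defn:impliesA} directly gives $R_m^n(\alpha_m^n(\sigma_{mn}),\alpha_m^n(\sigma_{nm}))$. To feed this into Definition~\ref{defn:alphaabs} I instantiate its quantified abstract states with $\alpha_m^n(\sigma_{mn})$ and $\alpha_m^n(\sigma_{nm})$; its inner premise requires $(\sigma_{mn},\sigma_{nm}) \in \gamma(\alpha_m^n(\sigma_{mn})) \times \gamma(\alpha_m^n(\sigma_{nm}))$, which holds by soundness of the abstraction, i.e.\ $\sigma \in \gamma(\alpha_m^n(\sigma))$ for every $\sigma$. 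This yields $\textsf{rvsagree}(\sigma,m,\As,n,\Bs)$. Step (b) is symmetric: Definition~\ref{defn:impliesB}, instantiated at the concrete states $\sigma_{mn},\sigma_{nm}$, transports $R_m^n(\alpha_m^n(\sigma_{mn}),\alpha_m^n(\sigma_{nm}))$ to $I_\beta(\beta(\sigma_{mn}),\beta(\sigma_{nm}))$, and then Definition~\ref{defn:betaabs}, instantiated at $\beta(\sigma_{mn}),\beta(\sigma_{nm})$ and using $\sigma \in \delta(\beta(\sigma))$, gives $\sigma_{mn}\opeq\sigma_{nm}$.

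I expect the main subtlety to lie not in this chaining but in reconciling the two obligations with the \emph{value-level} formulation of action commutativity in Definition~\ref{def:commute}, where the return values $\Us,\Vs$ are fixed constants rather than the values generated by \textsf{posts}. The careful argument is that, for arbitrary $\Us,\Vs$ admitted by $\varphi_m^n$, either the premise of Definition~\ref{def:commute} is satisfiable---in which case determinism and totality of the big-step semantics force $\Us,\Vs$ to coincide with the return values produced by \textsf{posts} in \emph{both} orders, so that (a) is precisely what makes the premise consistent and (b) supplies the required $\sigma_{mn}\opeq\sigma_{nm}$---or else no execution realizes $\Us,\Vs$ in both orders and action commutativity holds vacuously. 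Thus (a) is what guarantees the definition is met non-vacuously. The one external fact I rely on is the soundness condition $\sigma \in \gamma(\alpha_m^n(\sigma))$ and $\sigma \in \delta(\beta(\sigma))$ invoked above; if it is not implicitly assumed of the abstraction/concretization pairs, it should be stated as a hypothesis, since Definitions~\ref{defn:alphaabs} and~\ref{defn:betaabs} quantify over concrete members of the concretizations.
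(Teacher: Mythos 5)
Your proof is correct and follows essentially the same route as the paper's own sketch: chain Definition~\ref{defn:impliesA} with Definition~\ref{defn:alphaabs} to get return-value agreement, and Definition~\ref{defn:impliesB} with Definition~\ref{defn:betaabs} to get observational equivalence of the post-states. The two subtleties you flag---the implicit soundness assumption $\sigma \in \gamma(\alpha(\sigma))$ (and $\sigma \in \delta(\beta(\sigma))$) needed to instantiate the concretization quantifiers, and the determinism/vacuity argument reconciling the fixed $\Us,\Vs$ of Definition~\ref{def:commute} with the values actually produced by \textsf{posts}---are both glossed over in the paper's sketch, and your handling of them is sound.
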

\noindent
The idea is that for a $m$/$n$ pair, find an $(\alpha, R_\alpha)$ that witnesses the way in which the order causes divergence but strong enough to imply \emph{some} abstract equivalence relation $(\beta,I_\beta)$. 
Given that $\varphi_{\isinx}^{\ttt{clear}}$ implies the above $R_\alpha$, and that 
it is easy to see that $R_\alpha$ implies $I_{SS}$, we can conclude that 
$\varphi_{\isinx}^{\ttt{clear}}$ is a valid commutativity condition.

\section{Reduction to Reachability}
\label{sec:verification}

We now describe how we reduce the task of verifying commutativity condition $\varphi_m^n$
to a reachability problem.
To this end, we need a representation of object implementations, as well
as the output encoding. As noted, we build on the well-established notion of control-flow
automata (CFA)~\cite{Henzinger2002}, extending them slightly to represent objects. We then describe
our transformation from an input object CFA to an output encoding CFA $\autencMN$ with an error state $q_{er}$.  Finally,
we prove that, if $q_{er}$ is unreachable in $\autencMN$, then
$\varphi_m^n$ is a valid commutativity condition.


\subsection{Object Implementations}
\label{subsec:objects}


\begin{definition}[Control-flow automaton~\cite{Henzinger2002}]
A (deterministic) \emph{control flow automaton} is a
tuple $\aut = \langle Q, q_0, X, s, \cfarightarrow{\,} \rangle$ where $Q$ is a
finite set of control locations and $q_0$ is the initial control location, $X$
is a finite sets of typed variables, $s$ is the loop/branch-free statement language (as defined
earlier) and $\cfarightarrow{\,}\subseteq Q \times s \times Q$ is a finite set
of labeled edges.
\end{definition}

\newcommand\run{r}

\paragraph{Valuations and semantics.}
We define a \emph{valuation} of variables 
$\vln : X \rightarrow \mathit{Val}$ to be a mapping from
variable names to values. Let $\Vlns$ be the set of all valuations.
The notation $\vln' \in \sem{s}\vln$ means that executing statement $s$, using
the values given in $\vln$, leads to a new
valuation $\vln'$, mapping variables $X$ to new values. Notations $\sem{e}\vln$ and $\sem{b}\vln$ represent side-effect free numeric and boolean valuations, respectively.
We assume that for every $\vln,s$, that $\sem{s}\vln$ is computed in finite time.
The automaton edges, along with $\sem{s}$, give rise to possible transitions, denoted
$(q,\vln) \cfarightarrow{s} (q',\vln')$ but omit these rules for lack of space.

A \emph{run} of a CFA is an alternation of automaton states and
valuations denoted $\run=q_0,\vln_0,q_1,\vln_1,q_2,\dots$
such that $\forall i \geq0.\ (q_i,\vln_i) \cfarightarrow{s} (q_{i+1},\vln_{i+1})$.
%
We say  $\aut$ can \emph{reach} automaton state $q$ (\emph{safety}) provided 
there exists a run $\run= q_0,\vln_0,q_1,\vln_1,...$ such that there is
some $i \geq 0 $ such that  $q_i=q$.
%
We next conservatively extend CFAs to represent data structure implementations:

\begin{definition}[Object CFAs]
  An  \emph{object control flow automaton} for object $o$
  with methods $M=\{m_1,...m_k\}$, is:
\[\begin{array}{l}
\aut_o = \langle Q_o, [q^\textsf{init}_0,q^\textsf{clone}_0,q^{m_1}_0,\ldots,q^{m_k}_0], X_o, s, \cfarightarrow{\,} \rangle\\
X_o = X^{st} \cup \{ \textsf{this}_o \} \cup X^\textsf{init} \bigcup_{i\in[1,k]}  (X^{m_i} \cup X^{m_i}_{\bar{x}_i} \cup X^{m_i}_{\Rms_i}) 
\end{array}\]
where $Q_o$ is a finite set of control locations,
$q_0^\textsf{init}$ is the initial control location of the initialization routine,
$q_0^\textsf{clone}$ is the initial control location of the clone routine,
 and $q^{m_i}_0$ is the initial
control location for method $m_i$. 
Component $X_o$
is a union of 
sets of typed variables:
$X^{st}$ representing the object state/fields,
a \textsf{this} variable, and
$X^\textsf{init}$ representing  the initialization routine's local variables.
For each $m_i$,
$X^{m_i}$ represents method $m_i$'s local variables,
$X_{\bar{x}_i}^{m_i}$ represents parameters, and
$X^{m_i}_{\Rms_i}$ represents return variables.
%
Finally, $s$ is the statement language and
 $\cfarightarrow{\,}\subseteq Q_o \times s \times Q_o$ is
a finite set of labeled edges.
\end{definition}

\noindent
For simplicity, the above definition avoids inter-procedural reasoning. However,
we need to be able to invoke methods. 
Above, we will call each $q_0^{m_i}$ node
the \emph{entry node} for the implementation of method $m_i$ and we additionally require that, for every method,
there is a special \emph{exit node} $q_{ex}^{m_i}$. We require that
the edges that lead to $q_{ex}^{m_i}$ contain \texttt{return}($\bar{v}$) statements.
%
Arcs are required to be deterministic. This is not without loss
of generality. We support nondeterminism in our examples
by symbolically determinizing the input
CFA: whenever there is a nondeterministic operation $m(\Xs)$, we
can augment $\Xs$ with a fresh prophecy variable $\rho$, and replace
the appropriate statements with a version that consults $\rho$ to resolve nondeterminism
(see~\cite{DBLP:conf/popl/CookK11}).
%
%
The semantics $(q,\vln)\cfarightarrow{s}(q',\vln')$ of $\aut_o$ induce a labeled transition system, with state space $\Sigma_{\aut_o} = Q_o\times \Theta$.
%
Naturally, commutativity of an object CFA is defined in terms of this induced transition system.

%

%

\subsection{Transformation}
\label{subsec:transformation}
%
%

\newcommand\enc{\mathcal{E}}
\newcommand\earrow[1]{\cfarightarrow{#1}_\enc}

\newcommand\inl[1]{-\!\!-\!\textsf{inl}(#1)\!\!\earrow{\,}}
\newcommand\nds{\bar{*}}
\newcommand\vrm{\bar{r}_{m}}
\newcommand\vrn{\bar{r}_{n}}
\newcommand\vrmn{\bar{r}_{mn}}
\newcommand\vrnm{\bar{r}_{nm}}

We now  define the transformation.
First,  syntactic sugar:
\[\begin{array}{l}
q \inl{m_i,o,\Xs,\Rms} q' \equiv\\
\;\;\;\{ q \earrow{\bar{x}_i := \Xs;} q_0^{m_i}, \;\;
   q_{ex}^{m_i} \earrow{\Rms := \bar{r}_i;} q' \}
\end{array}\]
This definition allow us to emulate function calls 
to a method $m_i$, starting from CFA node $q$. Values
$\Xs$ are provided as arguments, and arcs are created 
to the entry node $q_0^{m_i}$ for method $m_i$.
Furthermore, return values are saved into $\Rms$ and 
an arc is created from the exit node $ q_{ex}^{m_i} $ to $q'$.
Also, we will let $assume(\bar{x} \neq \bar{y})$ mean the disjunction of inequality between corresponding vector elements, \ie\, if there are $N$ elements, then the notation means $x_0\neq y_0 \vee\cdots\vee x_N\neq y_N$.

\newcommand\cmt[1]{\hspace{-0.1in}\textrm{(#1)}}

\begin{definition}[Transformation]
For an input object CFA
$\aut_o = \langle Q_o, [q^{c}_0, q^\textsf{init}_0,q^{m_1}_0,\ldots,q^{m_k}_0], X_o, s, \cfarightarrow{\,} \rangle$,
the result of the transformation when applied to methods
$m(\Xs),n(\Ys)$, is output CFA
$ \autencMmn{\varphi_m^n} = \langle Q_\enc,q_0^\enc,X_\enc,s_\enc,\earrow{\ } \rangle$, where $\earrow{\,} \equiv$ is the union of
\begingroup
\allowdisplaybreaks
\begin{align*}
&\cfarightarrow{\,} & \cmt{$o$'s source}\\
&q^\enc_0 \inl{\ttt{init},nil,[],[o_1]} q_1 & \cmt{Construct.}\\
&\cup_{m_i} \left\{ \begin{array}{l}
    q_1 \earrow{\Xs := \nds;} q_{1i}, \\
    q_{1i} \inl{m,o_1,\Xs,nil} q_1 \end{array}\right. & \cmt{Reachbl.~$o_1$}\\
&q_1 \earrow{\As := \nds;\; \Bs := \nds;\; assume(\varphi_m^n(o_1,\As,\Bs))} q_{11}
   & \cmt{Assume $\varphi_m^n$}\\
&q_{11} \inl{\ttt{clone},o_1,[],[o_2]} q_2 &\cmt{Clone $o_1$)}\\
&q_2 \inl{m,o_1,\As,\vrm} q_{21} & \cmt{$m;n$ for $o_1$} \\
&q_{21} \inl{n,o_1,\Bs,\vrmn} q_{22} \\
&q_{22} \inl{n,o_2,\Bs,\vrn} q_{23} & \cmt{$n;m$ for $o_2$}\\
&q_{23} \inl{m,o_2,\As,\vrnm} q_{3} \\
&q_3 \earrow{assume(\vrm \neq \vrnm \vee \vrn \neq \vrmn)} q_{er} & \cmt{Rv's  agree}\\
&q_3 \earrow{\,} q_4 & \cmt{Loop}\\
&\cup_{m_i} \left\{\begin{array}{l}
   q_4 \earrow{ \As := \nds } q_{4i} \\
   q_{4i} \inl{m_i,o_1,\As,\Rms} q_{4i'} \\
   q_{4i'} \inl{m_i,o_2,\As,\Rns} q_{4i''} \\
   q_{4i''} \earrow{assume(\Rms \neq \Rns)} q_{er} \\
   q_{4i''} \earrow{\,} q_4
   \end{array}\right. & \cmt{Any $m_i$}
\end{align*}
\endgroup
and $Q_\enc$ is the union of $Q_o$ and all CFA nodes above,
$s_\enc$ is the union of $s$ and all additional statements above, and
$X_\enc = X_o \cup \{ o_1, o_2, \As, \Bs, \vrm, \vrn, \vrmn, \vrnm, nil, \Rms, \Rns\}$.
\end{definition}

We now describe the above transformation intuitively.
Node $q^\enc_0$ is the initial
node of the automaton. 
The transformation employs the implementation \emph{source code}
of the data structure CFA, given by $\cfarightarrow{\,}$.
The key theorem below says that non-reachability of
$q_{\mathit{er}}$ entails that $\varphi_m^n$ is a commutativity condition. We now discuss how the components of 
$\autencMN$ employ this strategy: 
\begin{enumerate*}
\item The first arcs of $Q_\enc$ are designed so that, for any
  reachable state $\sigma$ of object $o$, there will be some run of $\autencMN$
that witnesses that state. This is accomplished by arcs from $q_1$ into 
the entry node of each possible $m_i$, first letting $Q_\enc$ 
nondeterminsitically set arguments $\Xs$.
\item From $q_1$, nondeterministic choices are made for the method arguments
$m(\As)$ and $n(\Bs$), and then {\bf candidate condition} $\varphi_m^n$
is assumed.
And arc then causes a run to clone $o_1$ to $o_2$, and then
invoke $m(\As);n(\Bs)$ on $o_1$ and $n(\Bs);m(\As)$ on $o_2$.
\item From $q_3$, there is an arc to $q_{er}$ which is feasible
if the return values disagree. There is also an arc to $q_4$.
\item $q_4$ is the start of a loop. From $q_4$, for any possible method 
$m_i$, there is an arc with a statement $\As := \nds$ to chose nondeterministic
values, and then invoke $m_i(\As)$ on both $o_1$ and $o_2$.
If it is possible for the resulting return values to disagree, then a run
could proceed to $q_{er}$.
\end{enumerate*}

\paragraph{Proving that $q_{er}$ is unreachable.}
From the structure of $Q_\enc$, there are two ways in which $q_{er}$ could be reached
and a proof of non-reachability involves establishing invariants at two locations:

\begin{itemize*}
\item \emph{\mndifferencing\ relation at $q_3$}: At this location a program analysis
tool must infer a condition $R_\alpha$ that is strong enough to avoid $q_{er}$.
However, it may leverage the fact that $o_1$ and $o_2$ are similar.

\item \emph{Observational equivalence at $q_4$}: At this location,
$\autencMN$ considers any sequence of
    method calls $m',m'',\ldots$ that could be applied to both $o_1$
    and $o_2$. If observational equivalence does not hold, then
    there will be a run of $\autencMN$ that applies that
    some sequence of method calls to $o_1$ and $o_2$, eventually finding a discrepancy 
    in return values and making a transition to $q_{\mathit{er}}$.
    To show this is impossible, a program analysis tool can use an
   observational equivalence relation $I(o_1,o_2)$.    

\item \emph{Implication}: Finally, if it can be shown that $R_\alpha$ implies $I$,
 then $\varphi_m^n$ is a valid commutativity condition.

\end{itemize*}

%
%
%
%

\noindent
The utility of our encoding can 
be seen when considering what happens when a reachability analysis tool is applied to $\autencMmn{\varphi_m^n}$.
Specifically, we can leverage decades of ingenuity
that went into the development of effective automatic abstraction techniques
(such as trace abstraction~\cite{HHP:CAV2013}, interpolation~\cite{interpolation}, CEGAR~\cite{clarke2000counterexample}, 
$k$-Induction, block encoding~\cite{ABE}) for reachability verification tools. 
These tools are unaware of the concept of ``commutativity,'' nor do are they concerned with properties such as memory safety, full-functional correctness, etc.
These are targeted tools, tuned toward proving non-reachability of $q_{\mathit{er}}$.
In this way, when these abstraction strategies are applied to $\autencMmn{\varphi_m^n}$, they 
focus on only what is needed to show non-reachability, and avoid the need to reason
about large aspects of the program that are irrelevant to reachability (and thus irrelevant to commutativity).
%
%

%
\begin{theorem}[Soundness]
\label{thm:soundness}
 For  object implementation $\aut_o$ and resulting encoding $\autencMmn{\varphi_m^n}$, if every run of $\autencMmn{\varphi_m^n}$ avoids $q_{\mathit{er}}$, then $\varphi_m^n$ is a commutativity condition for $m(\Xs)$ and $n(\Ys)$.
\end{theorem}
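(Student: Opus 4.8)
The plan is to prove the statement directly in its operational form: assuming no run of $\autencMmn{\varphi_m^n}$ reaches $q_{er}$, I will show that for every reachable state $\sigma$ and all argument vectors $\As,\Bs$ with $\sem{\varphi_m^n}(\sigma,\As,\Bs)$ true, the action commutativity $m(\As)/\Us \bowtie n(\Bs)/\Vs$ of Definition~\ref{def:commute} holds, where $\Us,\Vs$ are the return vectors forced by determinism. By the definition of a commutativity condition this is exactly what must be established. The first ingredient I would isolate is a \emph{reachability-completeness} lemma for the encoding's gadget at $q_1$: since $\earrow{\,}$ contains $o$'s original source $\cfarightarrow{\,}$ together with the constructor arc and the self-loop $q_1 \to q_{1i} \to q_1$ that nondeterministically picks arguments and invokes each $m_i$ on $o_1$, some prefix of a run can drive the valuation of $o_1$ to any state reachable in $\aut_o$. (This is also where I would flag the mild gap between the $\forall\sigma$ of the definition and the merely reachable $\sigma$ the encoding exercises; as the introduction observes, only reachable states are relevant, so soundness is naturally stated relative to reachable $\sigma$.) Fixing such a $\sigma$ and a witnessing run reaching $q_1$ with $o_1=\sigma$, the arc $q_1 \to q_{11}$ guarded by $assume(\varphi_m^n(o_1,\As,\Bs))$ is enabled precisely because $\varphi_m^n$ holds, the clone arc sets $o_2=\sigma$, and totality and determinism of the big-step semantics guarantee that the ensuing arcs realize exactly $\textsf{posts}(\sigma,m,\As,n,\Bs)=(\sigma_{mn},\sigma_{nm})$, recording the four return vectors $\vrm,\vrmn,\vrn,\vrnm$.

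With this run in hand, return-value agreement is the easy half. The sole arc from $q_3$ to $q_{er}$ carries $assume(\vrm\neq\vrnm \vee \vrn\neq\vrmn)$; since $q_{er}$ is unreachable this guard must be infeasible on the constructed run, forcing $\vrm=\vrnm$ and $\vrn=\vrmn$. That is exactly $\textsf{rvsagree}(\sigma,m,\As,n,\Bs)$, i.e.\ the return-value requirement of action commutativity.

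The crux is observational equivalence of $\sigma_{mn}$ and $\sigma_{nm}$, handled at the $q_4$ loop, and here I would lean on the finitary counterexample characterization stated just after the definition of $\opeq$: if $\sigma_{mn}\not\opeq\sigma_{nm}$, then because $\opeq$ is a greatest fixpoint there is a \emph{finite} action sequence $m_1(\As_1),\ldots,m_k(\As_k)$ that, applied to both states, agrees on return values through step $k{-}1$ but disagrees at step $k$. I would then argue by induction on the loop iterations that the $q_4$ gadget can replay this sequence: each iteration selects a method $m_i$, sets $\As:=\nds$ to the witnessing arguments, and applies $m_i(\As)$ to \emph{both} $o_1$ and $o_2$ before the guard $assume(\Rms\neq\Rns)$, so by determinism the loop invariant ``$o_1,o_2$ hold the states obtained by applying the prefix of the witness to $\sigma_{mn},\sigma_{nm}$'' is preserved. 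At the final step the disagreeing values make $assume(\Rms\neq\Rns)$ feasible, yielding a run into $q_{er}$ and contradicting unreachability; hence $\sigma_{mn}\opeq\sigma_{nm}$.

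Combining the two halves gives $m(\As)/\Us \bowtie n(\Bs)/\Vs$ for the fixed reachable $\sigma$ and arbitrary $\As,\Bs$ satisfying $\varphi_m^n$, which is the definition of a commutativity condition. I expect the observational-equivalence step to be the principal obstacle: it requires a careful argument that co-inductive non-membership in $\opeq$ admits a finite witness (the standard fixpoint-approximant argument), that the ``same method, same arguments to both copies'' shape of the $q_4$ loop matches the shape of that witness, and that the replay invariant survives arbitrarily many iterations. The reachability-completeness lemma and the explicit handling of the $\forall\sigma$-versus-reachable-$\sigma$ discrepancy are the secondary points I would be careful to make rigorous.
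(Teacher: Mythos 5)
Your proof is correct, but it takes a genuinely different route from the paper's. The paper's proof sketch is proof-theoretic and routes through Theorem~\ref{thm:abstraction}: it assumes the safety proof of $\autencMmn{\varphi_m^n}$ comes packaged as invariants $R_\alpha$ at $q_3$ and $I$ at $q_4$, argues that the encoding's structure forces $R_\alpha$ to be an \mndifferencing\ abstraction and $I$ to be an abstract observational equivalence relation, checks that $\varphi_m^n$ implies $(\alpha,R_\alpha)$ and $(\alpha,R_\alpha)$ implies $(\alpha,I)$, and then invokes the abstraction theorem. You instead argue directly and model-theoretically by contraposition: if $\varphi_m^n$ failed to be a commutativity condition at some reachable $\sigma$, you construct a run witnessing either a return-value disagreement at $q_3$ or, via the finite counterexample to $\opeq$ replayed through the $q_4$ loop, a transition into $q_{er}$. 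Your approach buys self-containedness and avoids the paper's implicit completeness assumption that unreachability proofs take invariant form (harmless, since the exact reachable-state relation always serves, but unstated); the paper's approach buys the conceptual payoff of exhibiting the reduction as an instance of its abstraction framework, which is the point of Section~\ref{sec:abstraction}. Two of your flagged concerns deserve explicit resolution rather than worry: the finite-witness property for non-membership in $\opeq$ is exactly what the paper asserts immediately after its definition, and it is legitimate here because determinism of the big-step semantics makes the defining operator co-continuous, so the greatest fixpoint is the $\omega$-limit of its approximants and your replay induction matches the witness shape. Your observation about the $\forall\sigma$ in Definition~\ref{def:commute} versus the merely reachable states exercised by the $q_1$ gadget is a real discrepancy that the paper's own sketch glosses over; your restriction of the claim to reachable $\sigma$ is the correct reading, and making it explicit is an improvement rather than a gap.
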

\begin{proof}[Proof Sketch]
    We prove the theorem by considering a program analysis's proof that $q_{er}$ cannot be reached
    which takes the form of abstractions and invariants. 
    Since $q_{er}$ is not reachable, after reaching $q_3$ it must be the case that the return values agree. This forces $R_\alpha$ to be such that it is an $m,n$-differencing abstraction.
    Since $q_{er}$ is not reachable, after reaching $q_4$, the automaton loops in $q_4$ where $I$ holds followed by non-deterministic call to $m'$ with nondeterministic action of $m'$. Since the observational equivalence relation is the greatest fixpoint, we can conclude that $I$ satisfies the condition for being an abstract observational equivalence relation. 
    Thus, these invariants $R_\alpha$ and $I$ serve as
    the $m,n$-differencing abstraction and abstract observational equivalence relation, respectively.
    Moreover, the encoding also ensures that $\varphi_m^n$ implies $(\alpha,R_\alpha)$ and that
    $(\alpha,R_\alpha)$ implies $(\alpha,I)$.
 Finally, we employ Theorem~\ref{thm:abstraction}.

\end{proof}
%
The above captures the property of the transformation, which is carefully done so that the abstractions satisfy the hypotheses in Theorem \ref{thm:abstraction}. 

\begin{figure}[t]
$\autencMai \equiv$\\
$\left\{\hspace{0.05in}
  \begin{minipage}{5.3in}\footnotesize
    \fbox{\begin{minipage}{2.8in}
      \begin{program}[style=tt,number=true]
      SimpleSet s1 = new SimpleSet(); \label{ln:0} \\
      w\tab hile(*) \{  int t = *; assume (t>0); switch(*) \{ \\
        case 1: s1.add(t);  case 2: s1.isin(t);  \\
        case 3: s1.size();  
        case 4: s1.clear();  \}\} \untab
    \end{program}
    \end{minipage}}\hspace{0.05in}{\bf (A)}\\
    \fbox{\begin{minipage}{2.8in}
      \begin{program}[style=tt,number=true,firstline=5]
    int x = *; int y = *;\\
    assume( $\commuai(\texttt{s1,x,y})$ ); \label{ln:varphi} \\
    SimpleSet s2 = s1.clone(); \label{ln:copy} \hspace{0.5in}\\
    $\rvm$ = s1.$\myM$();    $\rvmn$ = s1.$\myN$();\\
    $\rvn$ = s2.$\myN$();    $\rvnm$ = s2.$\myM$();\\
    assert($\rvm = \rvnm$ \&\& $\rvn=\rvmn$); \label{ln:rv} \\
    $\boxed{\{ R_{\myM}^{\myN}(\texttt{s1},\texttt{s2}) \}}$ \label{ln:q} $\;$
    \end{program}
    \end{minipage}}\hspace{0.05in}{\bf (B)}\\
    \fbox{\begin{minipage}{2.8in}
      \begin{program}[style=tt,number=true,firstline=11]
  wh\tab ile(true) \{ $\boxed{ \{ I_{SS}(\texttt{s1},\texttt{s2}) \} }$ \label{ln:i} \\ 
   in\tab t t = *; assume (t>0); switch(*) \{ \\
  case 1: assert(s1.add(t)\;\;\; == s2.add(t)); \\
  case 2: assert(s1.isin(t) == s2.isin(t)); \\
  case 3: assert(s1.clear()\; == s2.clear()); \\
  case 4: assert(s1.size()\;\;\; == s2.size()); \} \} \label{ln:whiledone} \\ \untab 
      \end{program}
   \end{minipage}}\hspace{0.05in}{\bf (C)}
\end{minipage} \right.$
\caption{\label{fig:encoding} An example of our transformation $\autencMai$, when applied to the source of \addx\ and \isiny. Formally, the result is  an automaton but here it is depicted as a program. When a candidate formula $\commuai$ is supplied to this encoding, a proof of safety entails that
$\commuai$ is a commutativity condition. \mndifferencing\ abstraction $R_{\myM}^{\myN}$ and observational equivalence relation $I_{SS}$ can be discovered automatically using existing tools. }
\end{figure}

\subsection{Example: \ttt{SimpleSet}}

Fig.~\ref{fig:encoding} illustrates the output of applying our encoding to
the methods \addx\ and \isiny\ of the \ttt{SimpleSet} (from Fig.~\ref{fig:simpleset}) denoted $\autencMai$.
It is important to remember that this resulting encoding $\autencMai$ should never be executed. 
Rather, it is designed so that, when a program analysis tool for reachability is applied,
the tool is tricked into performing the reasoning necessary proving commutativity.
We now discuss the key aspects of the encoding, referring to the
corresponding portion of the pseudocode in Fig.~\ref{fig:encoding}.

\noindent
  (A) \emph{Any initial state.} 
    The encoding is designed so that, for any reachable abstract
    state $\sigma$ of the \ttt{SimpleSet} object, there will be a run
    of $\autencMai$ such that the \ttt{SimpleSet} on Line~\ref{ln:copy}
    will be in state $\sigma$. 
    
\noindent
  (B) \emph{Assume $\varphi_m^n$ and consider $m;\!n$ as well as $n;\!m$.}
    A program analysis will consider all runs that eventually
    exit the first loop (we don't care about those that never exit), and the
    corresponding reachable state \ttt{s1}.
    From \ttt{s1}, the encoding assumes that provided
    commutativity condition on Line~\ref{ln:varphi} and then 
    a clone of \ttt{s1} will be materialized.
    Our encoding then will cause a program analysis to consider 
    the effects of the methods applied in each order, and whether or not 
    the return values will match on Line~\ref{ln:rv}.

\noindent
   (C) \emph{Observational equivalence.} Lines~\ref{ln:i}-\ref{ln:whiledone} 
   consider any sequence of
    method calls $m'(\As'),m''(\As''),\ldots$ that could be applied to both \ttt{s1}
    and \ttt{s2}. If observational equivalence does not hold, then
    there will be a run of $\autencMai$ that applies that
    sequence to \ttt{s1} and \ttt{s2}, eventually finding a discrepancy 
    in return values and going to $q_{er}$.

\noindent
The encoding contains two boxed assertions:
$R_{\addx}^{\isiny}$ on Line~\ref{ln:q} must be strong enough
to guard against the possibility of a run reaching $q_{er}$
due to return values.
Similarly,
$I_{SS}$ on Line~\ref{ln:q} must be strong enough
to guard against the possibility of a run reaching $q_{er}$
owing to an observable difference between \ttt{s1} and \ttt{s2}.
Thus, to prove that \ttt{ERROR} is unreachable,
a verification tool needs to have or discover strong
enough such invariants.

\section{Implementation}
\label{sec:impl}

We have developed
\CityProver{},
capable of automatically verifying commutativity conditions of object methods directly from their implementations.
\CityProver{} takes, as input, C source code for the object (represented as a \verb|struct state_t| and series of methods each denoted \verb|o_m(struct state_t this,...)|). This includes the description of the object (the header file), but no specification. Examples of this input can be found \inapx{\ref{apx:benchmarks}}. We have written them as C macros so that our experiments focus on commutativity rather than testing existing tools' inter-procedural reasoning power, which is a separate matter.
Also provided as input to \CityProver{} is a commutativity condition $\varphi_m^n$ and the method names $m$ and $n$. \CityProver{} then implements the encoding $\autencMmn{\varphi_m^n}$, via a program transformation. 
\todo{clone post.}

\begin{figure}
\figboxN{\includegraphics[width=3.1in]{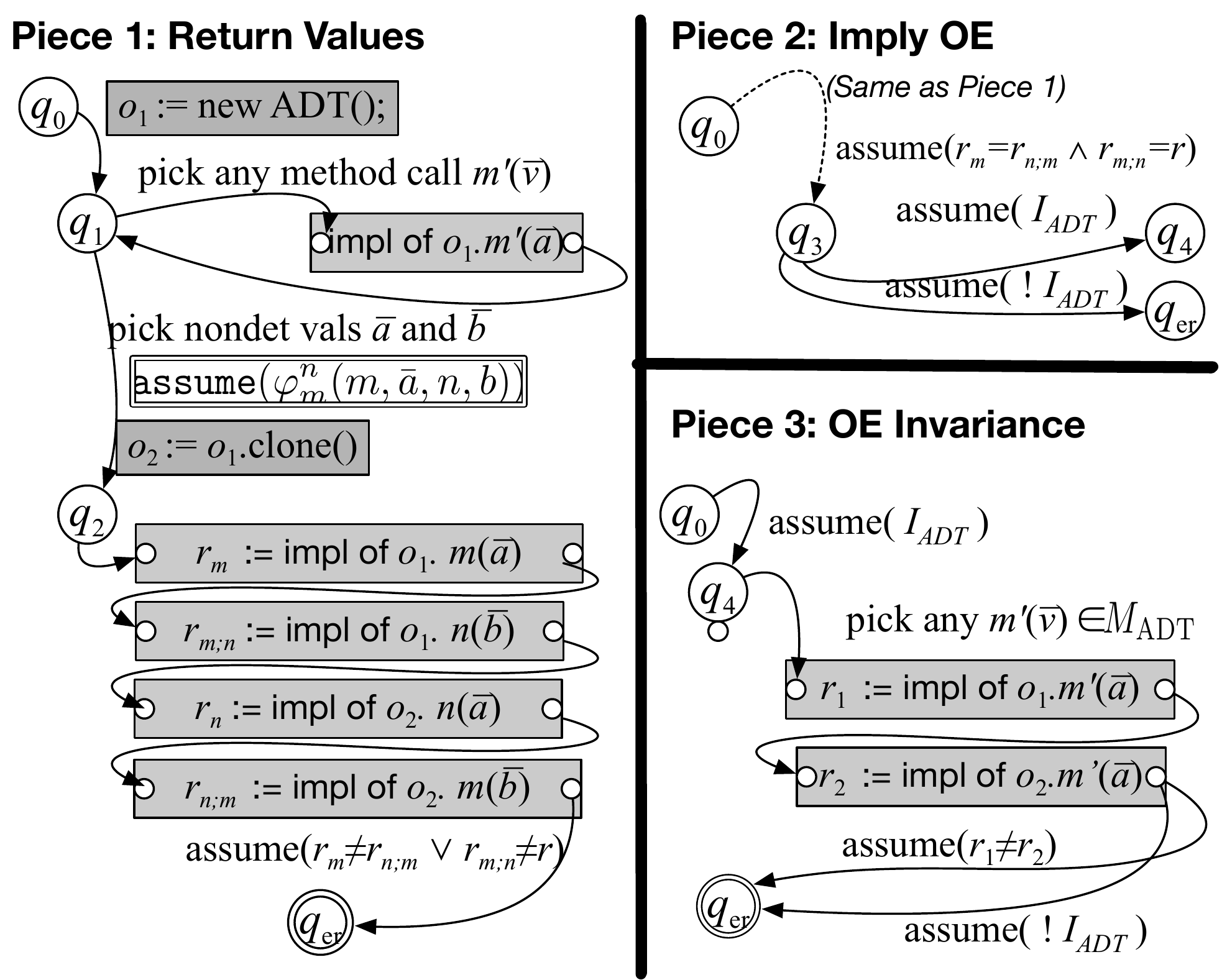}}
\caption{\label{fig:phases} The implementation of our reduction, which generates three separate reachability queries.}
\end{figure}

\paragraph{Abstracting twice.} 
The reduction given in Section~\ref{subsec:transformation} describes an encoding for which a single abstraction can be used. 
However, we found that existing tools struggled to find a single abstraction for reasoning about the multiple aspects of a commutativity proof.
To resolve this issue, we developed an alternative transformation, employed in our implementation, that takes advantage of the \emph{two} abstractions described in Section~\ref{sec:abstraction}:  $\alpha$ for \mndifferencing\ reasoning and $\beta$ for observational equivalence. 

The implementation of our transformation actually generates three different encoding ``pieces'' whose total safety is equivalent to the safety of the single-step reduction in Section~\ref{subsec:transformation}. In this way, different abstractions can be found for each piece.
%
An illustration is given in Fig.~\ref{fig:phases}. 
Piece 1 captures the first part of the general encoding, except the generated
CFA ends just after the instances of $m;n$, $n;m$, with a single \ttt{assume} arc to 
$q_{er}$, with the condition that $r_m \neq r_{nm} \vee r_n \neq r_{mn}$. This piece ensures that $\varphi_m^n$ at least entails that return values agree.
When a program analysis is applied in this piece, it discovers a $(\alpha, R_\alpha)$.
%
Piece 2 repeats some of piece 1's internals but then assumes return value agreement and is used to co\"{e}rce a safety prover to find a sufficient post-condition of $m;n$ and $n;m$ that entails the observational equivalence relation $I_{ADT}$. 
%
%
Finally, Piece 3 is used to co\"{e}rce a safety prover to discover a $\beta$ abstraction in order to prove that $I_{ADT}$ is indeed an observational equivalence relation. This is achieved  by simply \emph{assuming} a pair of states satisfying $I_{ADT}$, and then introducing arcs to $q_{er}$ that will be taken if ever a non-deterministically chosen method could observe different return values or lead to a pair of states that do not satisfy $I_{ADT}$.

Since Piece 3 is analyzed separately from Piece 1, the abstraction $\alpha$ for the \mndifferencing\ relation need not be the same abstraction as $\beta$, which is used for the abstract equivalence relation. Meanwhile, Piece 2 does the work of linking them together: a safety prover would discover a proof that $R_m^n$ implies $I$.



\newcommand\cpac[1]{\textcolor{purple}{#1}}
\newcommand\rTRUE{$\checkmark$}
\newcommand\rUNKNOWN{{\bf ?}}
\newcommand\rTIMEOUT{{\bf T.O.}}
\newcommand\rMEMOUT{{\bf M.O.}}
\newcommand\rIG{n/a}
\newcommand\rFALSE{$\chi$}
\newcommand\rgood{ok}
\newcommand\rFAIL{} 
\newcommand\ttM[1]{\texttt{#1}}
\newcommand\MM[2]{\ttM{#1}\bowtie\ttM{#2}}
\renewcommand\oe{}
\newcommand\rv{}
\newcommand\oeA{i}
\newcommand\oeB{ii}

%

\begin{figure*}[t]
\centering
  \setlength{\tabcolsep}{3pt} 
\scalebox{0.90}{
  \begin{tabular}{|l|c|p{0.75in}|c|rc|r|c|rc|r|c|}
    \hline
        {\bf ADT} & {\bf Meths}.~$m(x_1),n(y_1)$ &  $\varphi_{m(x_1)}^{n(y_1)}$  & {\bf Exp.} & \multicolumn{2}{c|}{{\bf CPAchecker} (by Piece)}  & $\Sigma$ & {\bf Out} & \multicolumn{2}{c|}{{\bf Ultimate} (by Piece)}  & $\Sigma$ & {\bf Out} \\
    \hline
    \input{bothfinal}
    \hline
  \end{tabular}}
  \caption{\label{fig:easy} Results of applying \CityProver{} to the small benchmarks. We report results when using CPAchecker as well as when using Ultimate.}
\end{figure*}

\newcommand\phiasA{{}^1\!\varphi_\texttt{push}^\texttt{pop}}

\newcommand\phihti{{}^i\!\varphi_\texttt{put}^\texttt{put}}
\newcommand\phihtA{{}^1\!\varphi_\texttt{put}^\texttt{put}}
\newcommand\phihtB{{}^2\!\varphi_\texttt{put}^\texttt{put}}
\newcommand\phihtC{{}^3\!\varphi_\texttt{put}^\texttt{put}}

\begin{figure*}[t]\centering
\scalebox{0.9}{\centering
  \begin{tabular}{|l|c|p{1.9in}|c||rc|r|c|}
\hline
        {\bf ADT} & {\bf Meths}.~$m(x_1),n(y_1)$ &  $\varphi_{m(x_1)}^{n(y_1)}$ & {\bf Exp.} & \multicolumn{2}{c|}{{\bf Ultimate} (by Piece)}  & $\Sigma$ & {\bf Out} \\
    \hline
    \input{hardult2}
    \hline
  \end{tabular}}
  \begin{center}
  $\begin{array}{ll}
  \phiasA \equiv & o_1.a[o_1.top] = x_1 \wedge o_1.top > 1 \wedge o_1.top < MAX \qquad\qquad\qquad\qquad\qquad\phihtA \equiv x_1 \neq y_1 \\
\phihtB \equiv & x_1 \neq y_1 \wedge o_1.table[x_1\%CAP].key = -1 \wedge o_1.table[y_1\%CAP].key = -1\\
\phihtC \equiv & x_1 \neq y_1 \wedge x_1\%CAP \neq y_1\%CAP \wedge o_1.table[x_1\%CAP].key = -1 \wedge o_1.table[y_1\%CAP].key = -1\\
\end{array}$\end{center}
  \caption{\label{fig:hard} Results of applying \CityProver{} to the more involved data structures, using Ultimate as a reachability solver. Longer commutativity conditions are defined below the table. Note that $o_1.x$ is the object field, and $x_1,y_1$ are $m,n$ parameters, resp.}
\end{figure*}



\section{Experiments}
\label{sec:eval}

Our first goal was to evaluate whether, through our transformation, we can use existing tools to verify commutativity conditions. To this end, we created a series of  benchmarks (with integers, simple pointers, structs and arrays) and ran \CityProver{} on them.
Our experiments were run on a Quad-Core
Intel(R) Xeon(R) CPU E3-1220 v6 at 3.00 GHz, inside a QEMU virtual host, running Ubuntu.

\paragraph{Simple benchmarks.}
We began with simple ADTs including:
a \textsf{Memory} that has a single memory cell, with methods \texttt{read(x)} and \texttt{write(x,v)};
an \textsf{Accumulator} with increment, decrement, and a check whether the value is 0;
and a \textsf{Counter}, like Accumulator, but that also has a \ttt{clear} method.
For each object, we considered some example method pairs with a valid commutativity condition (to check that the tool proves it to be so) and an incorrect commutativity condition (to check that the tool discovers a counterexample).
The objects, method pairs, and commutativity conditions are shown in Fig.~\ref{fig:easy}.

We ran \CityProver{} first using CPAchecker~\cite{cpachecker} and then using Ultimate~\cite{ultimate}, reporting the results in the subsequent columns of Fig.~\ref{fig:easy}. 
For CPAchecker we used the Z3~\cite{z3} SMT solver and (a minor adaptation of) the \texttt{predicateAnalysis} configuration.
For each tool, we report the time taken to prove/disprove each of the three Pieces (Section~\ref{sec:impl}), as well as the result for each piece ($\checkmark$ means the encoding was safe, and $\chi$ means a counterexample was produced). Finally, we report the final output of \CityProver{} and the overall time in seconds.
These experiments confirm that both CPAchecker and Ultimate can prove commutativity of these simple benchmarks, within a few seconds. In one case, CPAchecker returned an incorrect result.

\paragraph{Larger benchmarks.}
We next turned to data structures that store and manipulate elements, and considered the ADTs listed below with various candidate commutativity conditions.
For these ADTs, we had difficulty tuning CPAchecker (perhaps owing to our limited experience with CPAchecker) so we only report our experience using Ultimate as a reachability solver. The results of these benchmarks are given in Fig.~\ref{fig:hard}. We discuss each ADT in turn.

\begin{itemize*}

\item \textsf{ArrayStack}. (Fig.~\ref{fig:simpleset}) Note that push/pop commutativity condition $\phiasA$ is defined below the table in Fig.~\ref{fig:hard}. For this ADT, some method pairs and commutativity conditions took a little longer than others.

\item \textsf{SimpleSet}. (Fig.~\ref{fig:simpleset}) Most cases where straightforward; not surprisingly, \addx{}/\isiny{} took more time.

\item A simple \textsf{HashTable}, where hashing was done only once and insertion gives up if there is a collision. Here again, note the definitions of some commutativity conditions below the table.

\item \textsf{Queue}, implemented with an array. \CityProver{} was able to prove all but two commutativity conditions for the Queue. 
The \ttt{enq}/\ttt{deq} case ran out of memory.
\todo{fix}

\item \textsf{List Set}.  Some commutativity conditions could be quickly proved or refuted. For more challenging examples, \eg\ \ttt{add}/\ttt{rm}, Ultimate ran out of memory. 
In these cases reachability involves reasoning between lists with potentially different element orders. We believe that the necessary  disjunctive reasoning overwhelmed Ultimate. 


\end{itemize*}

In summary, \CityProver{} allowed us to prove a variety of commutativity conditions using unmodified Ultimate. The power of \CityProver{} greatly depends on the power of said underlying verification tool. 

For this prototype, we provided observational equivalence relations manually.
As tools improve their ability to synthesize loop invariants, this input may become unnecessary.



%

%
%
%
%
%
%
%
%
%
%
%
%

\subsection{Usability}
\label{subsec:usability}

We now discuss the usability of \CityProver{}, including
lessons learnt from our experience.

\emph{Does the approach avoid the need for full specifications?}
Yes. \mndifferencing\ focuses on the aspects of data structure implementations that pertain to commutativity and need not  capture full functional correctness. Since \mndifferencing\ is based on method orderings from \emph{the same} start state, many details of a specification become trivial. Notice that the commutativity conditions in Fig.~\ref{fig:hard} are far simpler than what the full specifications of those data structures would be. 
The ability to proceed without specifications is also convenient because, for many data structures that occur in the wild, specifications are not readily available.

\emph{Are commutativity formula easier to come up with than specifications?}
Intuitively, yes, because  commutativity conditions can often be very compact (\eg\ $x\neq y$). 
By contrast, specifications relate every post-state with its corresponding pre-state and typically involve multiple lines of logical formulae, not easily written by users. 
Since commutativity conditions are smaller, users can make a few guesses until they find one that holds.

We followed this methodology in some of the benchmarks:
(i)  \ttt{SimpleSet}. For $\ttt{isin}(x)\bowtie\ttt{clear}$, we originally thought that 
$\ttt{a} \neq x_1 \wedge \ttt{b} \neq y_1$ would be a valid commutativity condition, expressing that neither \ttt{a} nor \ttt{b} had the value. The tool caught us, and reported a counterexample, showing the typo ($y_1$ should be $x_1$).
(ii) \ttt{Queue}. For \ttt{enq}/\ttt{enq}, we guessed ``true'' as a first condition. \CityProver{} then produced a counterexample, showing they do not commute if the arguments are different. We strengthened the condition, but that still was not enough: the next counterexample showed they do not commute if there's not enough space.
(iii) \ttt{HashTable}. The successive conditions $\phihti$ represent our repeated attempts to guess commutativity conditions for $\ttt{put}(x_1)\bowtie\ttt{put}(y_1)$.
We first tried a simple condition $\phihtA$ expressing that the keys are distinct. \CityProver{} returned a counterexample, showing that commutativity doesn't hold because capacity may be reached. Next, we created $\phihtB$ expressing that the destination slots for $x_1$ and $y_1$ were both occupied. Again, \CityProver{} reported a counterexample that arises when the keys \emph{collide}. Finally, $\phihtC$ requires distinct buckets. 


\section{Related work}
\label{sec:relwork}

Bansal \emph{et al.}~\cite{tacas18} describe how to synthesize commutativity conditions from provided pre/post specifications, rather than from implementations. They assume that these specifications are precise enough to faithfully represent all effects that are relevant to the ADT's commutativity.
By contrast, our work does not require the data structure's specification
 to be provided which is convenient because specifications are typically not readily available for \emph{ad hoc} data
 structures. 
 
Moreover, it is not straight forward to infer the specification (\eg\ via static analysis) because the appropriate precision is not known \emph{apriori}. 
On one hand, a sound but imprecise specification can lead to incorrect commutativity conditions:
for a Stack, specification \texttt{\{true\}push(v)\{true\}} is sound but does not capture effects of \texttt{push} that
 are relevant to commutativity with, say, \texttt{pop}. With such an imprecise specification, Bansal \emph{et al.} would emit incorrect commutativity conditions.
On the other hand, 
a sound but overly precise specification can be burdensome and unnecessary for
 commutativity reasoning:
specification \texttt{\{$A,n$\}push(v)\{$A'. A[n+1]=v \wedge \forall i\leq n. A'[i] = A[i], n+1$\}} captures too much information about the stack (deep values in the stack) that is irrelevant to commutativity.
By contrast, in this paper we capture just what is needed
for commutativity, \eg, only the top and second-to-top elements of the stack.

Gehr \emph{et al.}~\cite{DBLP:conf/cav/GehrDV15} describe a method based on black-box sampling. They draw concrete arguments, extract relevant predicates from the sampled set of examples, and then search for a formula over the predicates. There  is no soundness or completeness guarantee.
Both Aleen and Clark~\cite{aleen} and Tripp \emph{et al.}~\cite{Tripp2012} identify sequences of actions that commute (via random interpretation and dynamic analysis, respectively). However, neither technique yields an explicit commutativity condition.
Kulkarni et al. \cite{KNPSP:PLDI11} point out that 
varying degrees of commutativity specification precision are useful.
Kim and Rinard \cite{KR:PLDI11} use Jahob to
verify manually specified commutativity conditions 
of several different linked data structures.
Commutativity specifications are also
found in dynamic analysis techniques~\cite{DBLP:conf/pldi/DimitrovRVK14}.
%
Najafzadeh \emph{et al.}~\cite{NGYFS2016} describe a tool for weak consistency, which
reports some commutativity checking of formulae. It does not appear to apply to 
data structure implementations.

Reductions to reachability are common in verification. They have been
used in security---so called self-composition~\cite{Barthe2004,Terauchi2005}---reduces (some forms of)
hyper-properties~\cite{Clarkson2010} to properties of a single program.
Others have reduced
crash recoverability~\cite{Koskinen2016},
temporal verification~\cite{Cook2011},
and linearizability~\cite{Bouajjani2015}
to reachability.

 

\section{Conclusion}
\label{sec:conclusion}

We have described a theory, algorithm and tool that, together, allow one to automatically verify commutativity conditions of data structure implementations. We describe an abstraction called \mndifferencing, that focuses the verification question on the differences in the effects of $m$ and $n$ that depend on the order in which they are applied, abstracting away effects that would have been the same regardless of the order. We further showed that verification of commutativity conditions can be reduced to reachability. Finally, we described \CityProver{}, the first tool capable of verifying commutativity conditions of data structure implementations.

The results of our work can  be used to  incorporate more commutativity conditions soundly and obtain speed ups in transactional object systems~\cite{aplas19,ppopp08}.
Further research is needed to use our commutativity proofs with parallelizing compilers.
Specifically, in the years to come, parallelizing compilers could combine our proofs of commutativity with automated proofs of linearizability~\cite{Bouajjani2015} to  execute more code concurrently and safely.


\bibliography{biblio}

\vfill
\pagebreak
\appendix
\onecolumn
\section{Benchmark Sources}
\label{apx:benchmarks}
\lstset{language=C}
\subsection{Memory}
\lstinputlisting{arxivsrc/mem.h}
\lstinputlisting{arxivsrc/mem.c}
\subsection{Counter}
\lstinputlisting{arxivsrc/counter.h}
\lstinputlisting{arxivsrc/counter.c}
\subsection{Accumulator}
\lstinputlisting{arxivsrc/accumulator.h}
\lstinputlisting{arxivsrc/accumulator.c}
\subsection{Array Queue}
\lstinputlisting{arxivsrc/queue.h}
\lstinputlisting{arxivsrc/queue.c}
\subsection{Array Stack}
\lstinputlisting{arxivsrc/stack.h}
\lstinputlisting{arxivsrc/stack.c}
\subsection{Hash Table}
\lstinputlisting{arxivsrc/hashtable.h}
\lstinputlisting{arxivsrc/hashtable.c}
\subsection{Simple Set}
\lstinputlisting{arxivsrc/simpleset.h}
\lstinputlisting{arxivsrc/simpleset.c}
\subsection{List}
\lstinputlisting{arxivsrc/list.h}
\lstinputlisting{arxivsrc/list.c}

\end{document}